\newcommand*{\dd}{\textrm{d}}
\newcommand*{\prt}{\partial}
\newcommand*{\be}{\begin{equation}}
\newcommand*{\ee}{\end{equation}}
\newcommand*{\bea}{\begin{eqnarray}}
\newcommand*{\eea}{\end{eqnarray}}
\newcommand*{\ba}{\begin{align}}
\newcommand*{\ea}{\end{align}}
\newcommand*{\bd}{\begin{displaymath}}
\newcommand*{\ed}{\end{displaymath}}
\newcommand*{\nn}{\nonumber\\}
\newcommand*{\mt}{\textrm}
\newcommand*{\tr}{\textrm{tr}}
\newcommand*{\1}{\mathbbm{1}}
\newcommand*{\G}{\mathcal{G}}
\newcommand*{\g}{\mathfrak{g}}
\theoremstyle{definition}\newtheorem{definition}{Definition}
\theoremstyle{definition}\newtheorem{proposition}{Proposition}
\begin{document}

\title{Group Fourier transform and the phase space path integral\\for finite dimensional Lie groups}

\author{Matti Raasakka}
\email{matti.raasakka@aei.mpg.de}
\affiliation{Max Planck Institute for Gravitational Physics (Albert Einstein Institute), Am M\"uhlenberg 1, D-14476 Golm, Germany}

\date{\today}

\begin{abstract}
We formulate a notion of group Fourier transform for a finite dimensional Lie group. The transform provides a unitary map from square integrable functions on the group to square integrable functions on a non-commutative dual space. We then derive the first order phase space path integral for quantum mechanics on the group by using a non-commutative dual space representation obtained through the transform. Possible advantages of the formalism include: (1) The transform provides an alternative to the spectral decomposition via representation theory of Lie groups and the use of special functions. (2) The non-commutative dual variables are physically more intuitive, since despite the non-commutativity they are analogous to the corresponding classical variables. The work is expected, among other possible applications, to allow for the metric representation of Lorentzian spin foam models in the context of quantum gravity.
\end{abstract}

\maketitle

\section{Introduction}
Recently, in the context of 3d quantum gravity models, the so-called group Fourier transform was introduced \cite{FL,FM}, which provides a unitary map from functions on $SO(3)$ to functions on a dual non-commutative space, analogous to the usual Fourier transform for functions on $\mathbb{R}^d$. It was further generalized to the case of $SU(2)$ in \cite{JMN,DGL} in two different ways. (See also \cite{S,MS}.) In \cite{BDOT} the transform was used to formulate a non-commutative metric representation for Loop Quantum Gravity quantum states, dual to the usual group representation, and similarly in \cite{BO} a metric formulation of the Boulatov model for 3d Euclidean quantum gravity was derived using the transform. The metric variables obtained through the transform clarify the geometric content of these models, and in \cite{BGO} this proved to be advantageous in writing down the action of diffeomorphisms for the Boulatov model. Furthermore, in \cite{OR} the transform was applied to write down a first order phase space path integral for quantum mechanics on $SO(3)$, which was checked to agree with the well-established results obtained by more conventional methods. Here, we generalize the formalism of group Fourier transform to an arbitrary finite dimensional Lie group $\G$, and give a dual representation of quantum mechanics on $\G$ in terms of non-commutative variables. We further derive the first order phase space path integral for a quantum system on $\G$.

\section{Group Fourier transform}
Let $\G$ be a finite dimensional Lie group and $\g^*$ the dual of the Lie algebra $\g$ of $\G$. The cotangent bundle $\mathcal{T}^*\G$ of $\G$ is globally trivial, $\mathcal{T}^*\G = \G \times \g^*$, due to the automorphisms of $\G$ given by the group multiplication. The general idea behind the group Fourier transform \cite{FL,FM,JMN,DGL,OR} is the following.
\begin{definition}
The non-commutative plane wave is a function $E: \mathcal{T}^*\G = \G \times \g^* \rightarrow \mathbb{C}$, $(g,X) \mapsto E_g(X)$, from the cotangent bundle $\mathcal{T}^*\G$ of $\G$ to $\mathbb{C}$ in terms of which the delta distribution $\delta_e$ at the identity $e \in \G$ can be decomposed as
\be\label{eq:deltadecomp}
	\delta_{e}(g) = \int_{\g^*} \frac{\dd X}{(2\pi)^d}\ E_g(X)\ ,
\ee
where $\dd X$ is the Lebesgue measure on $\g^*$ and $d := \dim\G$.
\end{definition}
Now, we want to introduce a $\star$-product associated to the non-commutative plane wave $E$, which reflects the group structure. In general, it is a deformation of the usual point-wise product of functions on $\g^*$. This will later give rise to the non-commutative structure of the dual space of functions on $\g^*$ by linearity.
\begin{definition}
The $\star$-product associated to $E$ for functions on $\g^*$ is such that
\be\label{eq:stardef}
	E_g(X) \star E_h(X) = E_{gh}(X) \ .
\ee
\end{definition}
\begin{proposition}
The $\star$-product is associative, but not commutative, in general. The constant unit function $1$ is the unity with respect to the $\star$-product, if $E_{e}(X) = 1$.
\end{proposition}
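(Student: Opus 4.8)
The plan is to reduce each of the three assertions to the corresponding property of the group multiplication, using that the defining relation (\ref{eq:stardef}) fixes the $\star$-product on the plane waves, and that an arbitrary function on $\g^*$ can be written as a superposition of the plane waves $E_g$, as suggested by the decomposition (\ref{eq:deltadecomp}). Since $\star$ extends to general functions by bilinearity, it suffices to verify associativity, the unit property, and the failure of commutativity on the plane waves $E_g$ alone; each statement then propagates to the whole algebra.

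For associativity I would evaluate the two bracketings on plane waves $E_a, E_b, E_c$ straight from (\ref{eq:stardef}), obtaining $(E_a \star E_b)\star E_c = E_{ab}\star E_c = E_{(ab)c}$ and $E_a \star (E_b \star E_c) = E_a \star E_{bc} = E_{a(bc)}$. Because group multiplication is associative, $(ab)c = a(bc)$, the two results coincide, and bilinearity then yields $(f\star k)\star l = f\star(k\star l)$ for all functions $f,k,l$. For the unit, assuming $E_e = 1$, the identity axiom $eg = g = ge$ gives $1\star E_g = E_{eg} = E_g = E_{ge} = E_g\star 1$, so the constant function $1$ is a two-sided unit on plane waves and hence, by linearity, on every function. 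For the failure of commutativity, I would note that $E_g\star E_h = E_{gh}$ whereas $E_h\star E_g = E_{hg}$; whenever $\G$ is non-abelian one may pick $g,h$ with $gh\neq hg$, and the two $\star$-products then differ.

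The step I expect to be the main obstacle is not the algebra, which is immediate, but the two structural facts underpinning it. First, that fixing $\star$ on the plane waves genuinely determines it on all of a suitable space of functions on $\g^*$ --- i.e. that the plane waves are complete, so that the linear extension is well defined and consistent; this is precisely what the delta decomposition (\ref{eq:deltadecomp}) is meant to guarantee. Second, the non-commutativity argument requires that the assignment $g\mapsto E_g$ separate group elements, so that $gh\neq hg$ actually forces $E_{gh}\neq E_{hg}$; without such faithfulness one could not conclude that $\star$ is genuinely non-commutative. I would therefore devote the care in the write-up to these completeness and faithfulness properties, treating the associativity, unit, and non-commutativity identities themselves as direct consequences of the group axioms via (\ref{eq:stardef}).
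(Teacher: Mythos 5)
Your proposal is correct and follows essentially the same route as the paper: the paper's proof likewise reduces associativity and non-commutativity directly to the group multiplication via $E_g(X)\star E_h(X)=E_{gh}(X)$, and verifies the unit property by the same computation $1\star E_g(X)=E_e(X)\star E_g(X)=E_{eg}(X)=E_g(X)$ (and its mirror image). Your additional attention to completeness of the plane waves and faithfulness of $g\mapsto E_g$ is a reasonable refinement of points the paper leaves implicit in its definition of the linear extension, but it does not constitute a different argument.
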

\begin{proof}
The associativity and commutativity properties of the $\star$-product follow directly from the group structure, which the $\star$-product reflects. Moreover,
\be
	1 \star E_g(X) = E_e(X) \star E_g(X) = E_{eg}(X) = E_g(X) \ ,
\ee
and similarly for $E_g(X) \star 1 = E_g(X)$.
\end{proof}
From hereon we will assume that $E_{e}(X) = 1$.
\begin{definition}
The group Fourier transform $\tilde{\phi} \in L^2_\star(\g^*,\dd X)$ of a function $\phi \in L^2(\G,\dd g)$ is
\be
	\tilde{\phi}(X) := \int_\G \dd g\ E_{g^{-1}}(X) \phi(g)\ ,
\ee
where $\dd g$ denotes the left-invariant Haar measure on $\G$, and the $\star$-product is extended to $L^2_\star(\g^*,\dd X)$ by linearity.
\end{definition}
Here, the function space $L^2_\star(\g^*,\dd X)$ is defined as the image of $L^2(\G,\dd g)$ under the transform, equipped with the norm
\be
	\langle \tilde{\phi} | \tilde{\phi'} \rangle := \int_{\g^*} \frac{\dd X}{(2\pi)^d}\ \overline{\tilde{\phi}(X)} \star \tilde{\phi'}(X) \ .
\ee
\begin{proposition}
The inverse transform is obtained as
\be
	\phi(g) = \int_{\g^*} \frac{\dd X}{(2\pi)^d}\ E_g(X) \star \tilde{\phi}(X)\ .
\ee
\end{proposition}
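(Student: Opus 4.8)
The plan is to verify the claimed inversion formula by substituting the definition of the group Fourier transform into its right-hand side and then collapsing the resulting double integral with the delta decomposition~\eqref{eq:deltadecomp}. Concretely, I would start from
\[
  \int_{\g^*} \frac{\dd X}{(2\pi)^d}\ E_g(X) \star \tilde{\phi}(X)
  = \int_{\g^*} \frac{\dd X}{(2\pi)^d}\ E_g(X) \star \int_\G \dd g'\ E_{g'^{-1}}(X)\, \phi(g')
\]
and push the $g'$-integration outside both the $\star$-product and the $X$-integration. Since $\phi(g')$ is a scalar that does not depend on $X$, this step is exactly the linear (bilinear) extension of $\star$ to $L^2_\star(\g^*,\dd X)$ together with a Fubini-type interchange of the two integrals.

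Once the integrals are reordered, I would apply the defining relation~\eqref{eq:stardef} of the $\star$-product pointwise in $g'$, namely $E_g(X)\star E_{g'^{-1}}(X) = E_{gg'^{-1}}(X)$, to obtain $\int_\G \dd g'\ \phi(g') \int_{\g^*} \frac{\dd X}{(2\pi)^d}\ E_{gg'^{-1}}(X)$. The inner $X$-integral is then recognized, via the defining property~\eqref{eq:deltadecomp} of the non-commutative plane wave applied to the group element $gg'^{-1}$, as the delta distribution $\delta_e(gg'^{-1})$. Thus the right-hand side reduces to $\int_\G \dd g'\ \phi(g')\, \delta_e(gg'^{-1})$.

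The final step is to show that $\int_\G \dd g'\ \phi(g')\, \delta_e(gg'^{-1}) = \phi(g)$. Since $\delta_e$ is characterized by $\int_\G \dd h\ \delta_e(h) F(h) = F(e)$ with respect to the Haar measure, I would perform the change of variables $h = g g'^{-1}$ (so $g' = h^{-1}g$) and invoke the invariance of the measure to conclude that the delta is supported at $g' = g$ and integrates $\phi$ to $\phi(g)$.

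I expect the main obstacle to be the interchange carried out in the first step: justifying that the $\star$-product commutes with the group integral defining $\tilde{\phi}$, which is precisely where the continuity of the bilinear $\star$-map on $L^2_\star(\g^*,\dd X)$ and the applicability of Fubini must be controlled rather than assumed purely formally. A secondary point worth flagging is the change of variables in the last step: the clean identity $\int_\G \dd g'\ \phi(g')\, \delta_e(gg'^{-1}) = \phi(g)$ relies on the right-invariance of the left Haar measure, i.e.\ on $\G$ being unimodular (as are $SU(2)$, $SO(3)$, and the Lorentz group relevant here); for a non-unimodular $\G$ a modular-function factor $\Delta(g)$ would survive the integration, and the normalization of either the measure or the plane wave would have to be adjusted accordingly.
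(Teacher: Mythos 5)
Your proposal follows essentially the same route as the paper's proof: insert the definition of $\tilde{\phi}$, interchange the $g'$- and $X$-integrations, apply (\ref{eq:stardef}) to combine the plane waves into $E_{gg'^{-1}}(X)$, and collapse the $X$-integral via (\ref{eq:deltadecomp}) to a delta supported at $g'=g$. Your two added caveats — justifying the Fubini-type interchange, and noting that the final step $\int_\G \dd g'\, \phi(g')\,\delta_e(gg'^{-1}) = \phi(g)$ is clean only for unimodular $\G$ (otherwise a modular factor $\Delta(g)$ survives) — are sound refinements of steps the paper performs purely formally, but they do not alter the argument.
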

\begin{proof}
We have
\bea
	&& \int_{\g^*} \frac{\dd X}{(2\pi)^d}\ E_g(X) \star \tilde{\phi}(X) = \int_\G \dd h \int_{\g^*} \frac{\dd X}{(2\pi)^d}\ E_g(X) \star E_{h^{-1}}(X) \phi(h) \nn
	&=& \int_\G \dd h \int_{\g^*} \frac{\dd X}{(2\pi)^d}\ E_{gh^{-1}}(X) \phi(h) = \int_\G \dd h\ \delta_g(h) \phi(h) = \phi(g) \ ,
\eea
where we used the properties (\ref{eq:deltadecomp}) and (\ref{eq:stardef}).
\end{proof}
\begin{proposition}
The transform is unitary, i.e.,
\be
	\int_\G \dd g\ \overline{\phi(g)} \phi'(g) = \int_{\g^*} \frac{\dd X}{(2\pi)^d}\ \overline{\tilde{\phi}(X)} \star \tilde{\phi'}(X) \ ,
\ee
if $E_{g^{-1}}(X) = \overline{E_{g}(X)}$.
\end{proposition}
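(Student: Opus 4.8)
The plan is to compute the right-hand side directly by unfolding both Fourier transforms and reducing everything back to the delta decomposition (\ref{eq:deltadecomp}), exactly as in the preceding proof of the inversion formula. The only genuinely new ingredient is the hypothesis $E_{g^{-1}}(X) = \overline{E_g(X)}$, which is invoked once, right at the start.

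First I would take the complex conjugate of the defining integral for $\tilde{\phi}$. Since $\overline{\tilde{\phi}(X)} = \int_\G \dd g\ \overline{E_{g^{-1}}(X)}\ \overline{\phi(g)}$, the hermiticity hypothesis converts $\overline{E_{g^{-1}}(X)}$ into $E_g(X)$, so that $\overline{\tilde{\phi}(X)} = \int_\G \dd g\ E_g(X)\ \overline{\phi(g)}$. I would then substitute this, together with $\tilde{\phi'}(X) = \int_\G \dd h\ E_{h^{-1}}(X)\phi'(h)$, into the right-hand side of the claimed identity.

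Next, using that the $\star$-product is extended to $L^2_\star(\g^*,\dd X)$ by linearity, so that it is bilinear and the scalar factors $\overline{\phi(g)}$, $\phi'(h)$ together with the two Haar integrations pass freely through it, I would pull both group integrals outside the $\star$-product and the $X$-integral. This leaves the kernel $E_g(X)\star E_{h^{-1}}(X)$, which by the definition (\ref{eq:stardef}) equals $E_{gh^{-1}}(X)$. Its $X$-integral is precisely the quantity already evaluated in the inversion proof, namely $\int_{\g^*}\frac{\dd X}{(2\pi)^d}\,E_{gh^{-1}}(X)=\delta_g(h)$, which is just (\ref{eq:deltadecomp}) combined with invariance of the Haar measure. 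Integrating over $h$ against this delta collapses $h \to g$ and returns $\int_\G \dd g\ \overline{\phi(g)}\phi'(g)$, the left-hand side.

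Since the computation is essentially a repetition of the inversion argument, I expect no deep obstacle; the one step that must be handled with care is the very first one, where hermiticity is used, because this is exactly what guarantees that the conjugated plane wave re-enters as a genuine plane wave $E_g$ and can therefore recombine with $E_{h^{-1}}$ under $\star$ to give $E_{gh^{-1}}$. The remaining points (the Fubini-type interchange of the Haar and Lebesgue integrations, and the distributional reading of $\delta_g$) are the same formal manipulations already employed above and need not be belabored.
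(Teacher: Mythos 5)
Your proposal is correct and follows essentially the same direct computation as the paper's proof: unfold both transforms, use the hermiticity hypothesis to turn the conjugated plane wave back into a genuine plane wave, recombine under the $\star$-product via (\ref{eq:stardef}), and reduce the $X$-integral to $\delta_g(h)$ via (\ref{eq:deltadecomp}). The only cosmetic difference is that your (literal) conjugation produces the kernel $E_{gh^{-1}}(X)$ where the paper's displayed line has $E_{g^{-1}h}(X)$; at the formal level at which the paper operates, both collapse to $\delta_g(h)$ and yield the same identity.
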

\begin{proof}
The proof follows by a direct calculation,
\bea
	&& \int_{\g^*} \frac{\dd X}{(2\pi)^d}\ \overline{\tilde{\phi}(X)} \star \tilde{\phi'}(X) = \int_\G \dd g \int_\G \dd h \int_{\g^*} \frac{\dd X}{(2\pi)^d}\ \overline{\phi(g)}\ \overline{E_g(X)} \star E_h(X) \phi'(h) \nn
	&=& \int_\G \dd g \int_\G \dd h\ \overline{\phi(g)} \phi'(h) \int_{\g^*} \frac{\dd X}{(2\pi)^d}\ E_{g^{-1}h}(X) = \int_\G \dd g \int_\G \dd h\ \overline{\phi(g)} \phi'(h) \delta_g(h) \nn
	&=& \int_\G \dd g\ \overline{\phi(g)} \phi'(g) \ . \quad
\eea
\end{proof}
From hereon we will assume $E_{g^{-1}}(X) = \overline{E_{g}(X)}$.

It is also easy to show that $\delta_\star(X) := \int_\G \dd g\ E_g(X)$ acts as the delta distribution with respect to the $\star$-product:
\be
	\int_{\g^*} \frac{\dd X}{(2\pi)^d}\ \delta_\star(X) \star \tilde{\phi}(X) = \tilde{\phi}(0) = \int_{\g^*} \frac{\dd X}{(2\pi)^d}\ \tilde{\phi}(X) \star \delta_\star(X)\ .
\ee
In general, $\delta_\star$ is a distribution, but if $\G$ is compact, then it is in fact a regular function. 

Moreover, we will require of $E$ the following \emph{linearity} properties\footnote{The spinorial non-commutative plane waves for the group Fourier transform on $SU(2)$ used in \cite{DGL} do not satisfy the second one of these properties, and therefore it seems not possible in that case to find a simple differential group element operator in the dual representation as below.}:
\be
	\left. -i \mathcal{L}_i E_g(X) \right|_{g=e} = X_i \quad\mt{and}\quad -i\prt_X^i E_g(X) = Z^i(g) E_g(X) \ , 
\ee
where $X_i$ are Euclidean coordinates in $\g^*$, $\prt_X^i := \prt/\prt X_i$, $\mathcal{L}_i$ are left-invariant Lie derivatives on $\G$ with respect to an orthonormal basis of the Lie algebra, and $Z^i$ is a set of coordinate functions on $\G$ such that $\mathcal{L}_i Z^j(e) = \delta_i^j$ and $Z^i(g^{-1}) = -Z^i(g)$. It follows that $E_g(X) = \exp(iZ(g)\cdot X)$.

In fact, if $\G$ contains compact subgroups, we have to exclude a set of measure zero, the non-unital involutive elements, i.e., $g \in \G$ such that $g^{-1}=g$, $g\neq e$, from the range of the coordinates $Z^i$ on $\G$ due to $Z^i(g^{-1})=-Z^i(g)$ following from the requirement of unitarity for the transform. However, this is fine, since we only need (\ref{eq:stardef}) to be well-defined under integration for the group Fourier transform. Denote $\mathcal{H} := \G\backslash\{g \in \G\ |\ g^{-1}=g,g\neq e \}$. Then we may write for $\phi \in L^{2}(\G,\dd g)$
\bea
	\tilde{\phi}(X) & \equiv & \int_\G \dd g\ E_{g^{-1}}(X) \phi(g) = \int_\mathcal{H} \dd g\ E_{g^{-1}}(X) \phi(g) \nn
	&=& \int_{\mathcal{R}} \omega(Z)\dd Z\ e^{iZ(g)\cdot X} \hat{\phi}(Z) \ ,
\eea
where $\mathcal{R}\subseteq \mathbb{R}^{\dim\G}$ is the range of the coordinates $(Z^i)$ on $\mathcal{H}$, $\omega(Z)\dd Z$ is the left-invariant Haar measure in terms of the Lebesgue measure $\dd Z$ on $\mathcal{R}$, and $\hat{\phi} := \phi \circ Z^{-1}$, $Z^{-1}: \mathcal{R} \rightarrow \mathcal{H}$ being the inverse of the coordinates $(Z^i): \mathcal{H} \rightarrow \mathcal{R} \subseteq \mathbb{R}^{\dim\G}$.

The linearity properties also imply that $E_g(X)$ is the generating function for polynomials in the coordinates. In particular,
\bea
	(-i)^n \mathcal{L}_{i_1} \mathcal{L}_{i_2} \cdots \mathcal{L}_{i_n} E_e(X) &=& (-i)^n \mathcal{L}_{i_1} E_e(X) \star \mathcal{L}_{i_2} E_e(X) \star \cdots \star \mathcal{L}_{i_n} E_e(X) \nn
	&=& X_{i_1} \star X_{i_2} \star \cdots \star X_{i_n} \ ,
\eea
and so we have $[X_i,X_j]_\star := X_i \star X_j - X_j \star X_i = -i c_{ij}^{\phantom{ij}k}X_k$, where $c_{ij}^{\phantom{ij}k}$ are the structure constants of $\g$. Accordingly, the non-commutative $\star$-product structure reflects the structure of the Lie algebra of $\G$.
\begin{proposition}
Let $\tilde{\phi},\tilde{\phi}' \in L^{2}_\star(\g^*,\dd X)$, and $\omega(Z)\dd Z$ be the Haar measure $\dd g$ on $\mathcal{H} \subseteq \G$ in terms of the Lebesgue measure $\dd Z$ on $\mathcal{R} \subseteq \mathbb{R}^{\dim\G}$. Then
\be\label{eq:starint}
	\int_{\g^*} \frac{\dd X}{(2\pi)^d}\ \tilde{\phi}(X) \star \tilde{\phi}'(X) \equiv \int_{\g^*} \frac{\dd X}{(2\pi)^d}\ \tilde{\phi}(X)\ \omega^{-1}\!\!\left(-i\prt_X^i\right) \tilde{\phi}'(X) \ .
\ee
\end{proposition}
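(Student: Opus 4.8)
The plan is to pass to the linear coordinates $Z$ on $\mathcal{H}$, where the group Fourier transform becomes an ordinary Fourier transform weighted by the Haar density. I would insert the representation derived above, $\tilde{\phi}(X) = \int_{\mathcal{R}}\omega(Z)\dd Z\, e^{iZ\cdot X}\hat{\phi}(Z)$, for both arguments, so that each side of (\ref{eq:starint}) becomes a double integral over $Z_1,Z_2\in\mathcal{R}$, and then compare the integrands. The single structural input I need is the second linearity property $-i\prt_X^i E_g(X) = Z^i(g) E_g(X)$: since the operators $-i\prt_X^i$ mutually commute, any Fourier multiplier $\omega^{-1}(-i\prt_X)$ — understood as the function $\omega^{-1}$ of these commuting operators, fixed by its eigenvalues on plane waves — acts on $e^{iZ\cdot X}$ simply by multiplication by $\omega^{-1}(Z)$.

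Granting this, the right-hand side is immediate: applying $\omega^{-1}(-i\prt_X)$ to $\tilde{\phi}'$ cancels the Haar weight, $\omega^{-1}(-i\prt_X)\tilde{\phi}'(X) = \int_{\mathcal{R}}\dd Z_2\, e^{iZ_2\cdot X}\hat{\phi}'(Z_2)$, turning the Haar integral into a plain Lebesgue integral. Multiplying by $\tilde{\phi}$ and carrying out the $X$-integral via $\int\frac{\dd X}{(2\pi)^d}e^{i(Z_1+Z_2)\cdot X} = \delta^{(d)}(Z_1+Z_2)$ collapses the right-hand side to $\int_{\mathcal{R}}\dd Z_1\,\omega(Z_1)\hat{\phi}(Z_1)\hat{\phi}'(-Z_1)$.

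For the left-hand side I would use that the $\star$-product realizes the group law in $Z$-coordinates, $e^{iZ_1\cdot X}\star e^{iZ_2\cdot X} = e^{iB(Z_1,Z_2)\cdot X}$ with $B(Z_1,Z_2):=Z(g_1 g_2)$, so that the $X$-integral now yields $\delta^{(d)}(B(Z_1,Z_2))$. This delta is supported where $g_1 g_2 = e$, i.e. $Z_2 = Z(g_1^{-1}) = -Z_1$, and evaluating it calls for the Jacobian of $Z_2 \mapsto B(Z_1,Z_2)$ at that point. Here the left-invariance of $\dd g = \omega(Z)\dd Z$ under $g \mapsto g_1 g$ gives $\left|\det\frac{\prt B(Z_1,Z_2)}{\prt Z_2}\right| = \omega(Z_2)/\omega(B(Z_1,Z_2))$, which on the support $B=0$ equals $\omega(-Z_1)/\omega(0)$. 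With the normalization $\omega(0)=1$, the weight $\omega(Z_2)=\omega(-Z_1)$ carried by the measure cancels the Jacobian exactly, and the left-hand side collapses to the same $\int_{\mathcal{R}}\dd Z_1\,\omega(Z_1)\hat{\phi}(Z_1)\hat{\phi}'(-Z_1)$, which proves (\ref{eq:starint}).

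The heart of the argument, and the step I expect to be the main obstacle, is the identification of the Jacobian of the multiplication map $B$ with the Haar density ratio $\omega(Z_2)/\omega(B)$; this is precisely the property that makes $\omega^{-1}(-i\prt_X)$ the correct operator, since it is this same weight that the multiplier removes on the other side. Care is needed to fix the normalization $\omega(0)=1$ (the Haar measure coinciding with the Lebesgue measure at the identity in these coordinates) and to use $Z(g^{-1})=-Z(g)$ so that the two $\omega$-evaluations match. One should also note that, because the measure's weight $\omega(Z_2)$ and the Jacobian's weight $\omega(Z_2)$ cancel before any value $\omega(B)$ or modular factor can intervene, no separate unimodularity assumption on $\G$ is required.
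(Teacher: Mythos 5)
Your proof is correct, and it reaches the paper's identity by a recognizably different route. The paper proves the statement directly on a pair of plane waves and then extends by linearity: it invokes the orthogonality relation $\int_{\g^*}\frac{\dd X}{(2\pi)^d}\,E_{g^{-1}}(X)\star E_h(X)=\delta_g(h)$ (immediate from the defining decomposition of $\delta_e$ together with $E_{g^{-1}}\star E_h = E_{g^{-1}h}$ and left-invariance via $\delta_e(g^{-1}h)=\delta_g(h)$), converts the Haar delta into a coordinate delta, $\delta_g(h)=\omega^{-1}(Z(h))\,\delta^{(d)}(Z(g)-Z(h))$ --- which is where the weight $\omega^{-1}$ enters, as a simple density conversion --- then Fourier-expands the coordinate delta and reads off the multiplier $\omega^{-1}(-i\prt_X^i)$ acting on $E_h$. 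You instead expand both $\tilde{\phi}$ and $\tilde{\phi}'$ in the coordinate representation and evaluate the two sides separately to the common expression $\int_{\mathcal{R}}\dd Z_1\,\omega(Z_1)\hat{\phi}(Z_1)\hat{\phi}'(-Z_1)$; the ingredient the paper never makes explicit is your Jacobian lemma $\bigl|\det \prt B(Z_1,Z_2)/\prt Z_2\bigr| = \omega(Z_2)/\omega(B(Z_1,Z_2))$ for the multiplication map, derived from left-invariance of $\omega(Z)\dd Z$. That lemma is logically equivalent to the paper's combined use of $\delta_e(g^{-1}h)=\delta_g(h)$ and the density conversion --- left-invariance is the common engine in both proofs --- but your version exhibits the mechanism transparently: the same Haar weight that $\omega^{-1}(-i\prt_X^i)$ strips from one factor is exactly what $\delta^{(d)}(B(Z_1,Z_2))$ produces at its zero locus $Z_2=-Z_1$ on the other side. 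You also correctly isolate two facts the paper leaves implicit, namely $\omega(0)=1$ (forced by the delta decomposition, consistently with $\mathcal{L}_iZ^j(e)=\delta_i^j$) and the role of $Z(g^{-1})=-Z(g)$ in matching the two $\omega$-evaluations, and your remark that no unimodularity is needed is accurate since only left translations are ever used. What the paper's route buys is brevity --- the star product of plane waves never has to be unpacked into the nonlinear composition function $B(Z_1,Z_2)$, and no Jacobian at the zero locus need be computed; what yours buys is an explicit verification of why $\omega^{-1}(-i\prt_X^i)$ is precisely the right multiplier. (One inessential caveat: the coordinate representation you quote inherits a sign slip from the paper --- since $E_{g^{-1}}(X)=e^{-iZ(g)\cdot X}$, the kernel should be $e^{-iZ\cdot X}$, or equivalently $\hat{\phi}$ should be reflected --- but as the proposition holds for arbitrary elements of $L^2_\star(\g^*,\dd X)$, this relabeling does not affect your argument.)
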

\begin{proof}
For $g,h\in\mathcal{H}$ we have
\bea
	&& \int_{\g^*}\frac{\dd X}{(2\pi)^d}\ E_{g^{-1}}(X) \star E_h(X) \equiv \delta_g(h) \nn
	&=& \omega^{-1}\!(Z^i(h))\delta(Z(g)-Z(h)) = \omega^{-1}\!(Z^i(h)) \int_{\g^*}\frac{\dd X}{(2\pi)^d}\ e^{-iZ(g)\cdot X} e^{iZ(h)\cdot X} \nn
	&=& \int_{\g^*}\frac{\dd X}{(2\pi)^d}\ E_{g^{-1}}(X) \omega^{-1}\!(-i\prt_X^i) E_h(X) \ ,
\eea
By extending the above equality by linearity to functions $\tilde{\phi} \in L^{2}_\star(\g^*,\dd X)$ we arrive at the proposition.
\end{proof}
\begin{proposition}
If $Z(hgh^{-1})\cdot X = Z(g)\cdot Ad_h X$, we have
\be\label{eq:ad}
	E_g(X) \star \tilde{\phi}(X) \star E_{g^{-1}}(X) = \tilde{\phi}(Ad_gX) \ .
\ee
\end{proposition}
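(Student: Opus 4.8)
The plan is to reduce everything to the defining composition law (\ref{eq:stardef}) of the $\star$-product together with the explicit form $E_g(X) = \exp(iZ(g)\cdot X)$. Since both sides of (\ref{eq:ad}) are linear in $\tilde{\phi}$, and every $\tilde{\phi} \in L^2_\star(\g^*,\dd X)$ is by definition the transform of some $\phi \in L^2(\G,\dd g)$, it suffices to work with the integral representation $\tilde{\phi}(X) = \int_\G \dd h\ E_{h^{-1}}(X) \phi(h)$. First I would substitute this representation into the left-hand side and move the two outer plane waves inside the $h$-integration: the $\star$-product acts only on the argument $X \in \g^*$, whereas the integration runs over the group, so the two operations commute and
\be
	E_g(X) \star \tilde{\phi}(X) \star E_{g^{-1}}(X) = \int_\G \dd h\ \phi(h)\ E_g(X) \star E_{h^{-1}}(X) \star E_{g^{-1}}(X) \ .
\ee

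Next I would collapse the triple product using (\ref{eq:stardef}) twice, associativity being guaranteed by the earlier proposition, to obtain $E_g(X) \star E_{h^{-1}}(X) \star E_{g^{-1}}(X) = E_{gh^{-1}g^{-1}}(X)$. The crux is then to interpret this conjugated plane wave in terms of the $Ad$ action. Writing $E_{gh^{-1}g^{-1}}(X) = \exp\!\big(iZ(gh^{-1}g^{-1})\cdot X\big)$ and applying the hypothesis $Z(hgh^{-1})\cdot X = Z(g)\cdot Ad_h X$ --- with the conjugating element taken to be $g$ and the conjugated element taken to be $h^{-1}$ --- gives $Z(gh^{-1}g^{-1})\cdot X = Z(h^{-1})\cdot Ad_g X$, hence $E_{gh^{-1}g^{-1}}(X) = E_{h^{-1}}(Ad_g X)$. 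Substituting back, the integral becomes $\int_\G \dd h\ E_{h^{-1}}(Ad_g X)\,\phi(h) = \tilde{\phi}(Ad_g X)$, which is exactly the right-hand side.

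The calculation is short, so the main ``obstacle'' is really only one of bookkeeping: one must match the hypothesis to the correct conjugation pattern, noting that it is the outer factor $g$ that plays the role of the conjugator in the stated identity, while the inner factor is $h^{-1}$ rather than $h$. The single genuine input beyond the $\star$-algebra is the hypothesis itself: without it, the composition law alone only tells us that conjugation by $E_g$ sends the plane wave of $h^{-1}$ to that of $gh^{-1}g^{-1}$, and it is precisely the assumed compatibility of the coordinates $Z^i$ with the adjoint action that converts this group-level conjugation into the geometric shift $X \mapsto Ad_g X$ of the argument. I would also remark that, as elsewhere in the paper, the manipulation is understood on $\mathcal{H}$, excluding the measure-zero set of non-unital involutive elements, and that the identity then extends to all of $L^2_\star(\g^*,\dd X)$ by linearity.
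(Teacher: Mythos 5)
Your proof is correct and follows essentially the same route as the paper: both rest on the single plane-wave identity $E_g(X) \star E_{h^{-1}}(X) \star E_{g^{-1}}(X) = E_{gh^{-1}g^{-1}}(X) = E_{h^{-1}}(Ad_g X)$, obtained from (\ref{eq:stardef}) and the hypothesis on $Z$, extended to all of $L^2_\star(\g^*,\dd X)$ by linearity. The only difference is presentational --- you make the linearity step explicit via the integral representation $\tilde{\phi}(X) = \int_\G \dd h\ E_{h^{-1}}(X)\phi(h)$, whereas the paper simply invokes it.
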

\begin{proof}
The proposition follows by linearity from
\be
	E_h(X) \star E_g(X) \star E_{h^{-1}}(X) = E_{hgh^{-1}}(X) = E_g(Ad_hX) \ . \quad
\ee
\end{proof}
We will hereon assume $Z(hgh^{-1})\cdot X = Z(g)\cdot Ad_h X$. Also, using this property, it is easy to show that
\be\label{eq:intcycl}
	\int_{\g^*} \frac{\dd X}{(2\pi)^d}\ \tilde{\phi}_1(X) \star \tilde{\phi}_2(X) \star \cdots \star \tilde{\phi}_n(X) = \int_{\g^*} \frac{\dd X}{(2\pi)^d}\ \tilde{\phi}_2(X) \star \cdots \star \tilde{\phi}_n(X) \star \tilde{\phi}_1(X) \ ,
\ee
i.e., an integral over $\g^*$ is invariant under a cyclic permutation of the $\star$-product factors of its integrand.

The choice for the explicit form of the non-commutative plane waves $E_g(X)$ is not completely obvious. It dictates the explicit form of the $\star$-product, and is thus connected to the issue of operator ordering in the quantization map, which the $\star$-product reflects from the deformation quantization point of view \cite{CD}. The canonical choice for coordinate functions $Z^i$ on the identity component of $\G$, when the exponential map is surjective, is obtained from the inverse of the exponential map $\exp: \g \simeq \mathbb{R}^d \rightarrow \G$, $Z \mapsto \exp Z$, for a portion of $\g$ containing the origin, where the exponential map is one-to-one. With this choice we have the natural inner product $\cdot: \g \times \g^* \rightarrow \mathbb{R}$ to use for the plane waves $E_g(X) = \exp(iZ(g)\cdot X)$.\footnote{In the case of several components or non-compact Lie groups for which the exponential map is not surjective, we may generalize to several coordinate patches as in \cite{JMN}, using coordinates $Z_h(g) \in \g$ such that $g\equiv he^{Z_h(g)}$ for different origins $h \in \G$ for the coordinate patches.} These coordinates satisfy all the above requirements. However, mainly other choices for the coordinates have been considered in the literature in the cases of the groups $SO(3)$ and $SU(2)$ \cite{FL,FM,JMN,DGL,BO,BDOT,BGO,BO2,OR}. In particular, \cite{FL,FM,BO,OR} use the coordinates $Z^i(g) = -\frac{i}{2}\tr(g\sigma^i)$ for $SO(3)$, where the trace is taken in the fundamental spin-$\frac{1}{2}$-representation and $\sigma^i$ are the Pauli matrices. They satisfy all the above assumptions for $Z^i$. These coordinates may be more convenient from the calculational point of view, but are not generalizable to arbitrary Lie groups.

Let us summarize the properties of the non-commutative plane wave, i.e., the function $E:\mathcal{T}^*\G \rightarrow \mathbb{C}$ we will use in the following. In short, we assume
\be
	E_g(X) \equiv \exp(iZ(g)\cdot X) \ ,
\ee
where $Z^i(g)$ are coordinates on $\G\backslash\{g \in \G\ |\ g^{-1}=g,g \neq e\}$ such that
\begin{enumerate}
	\item $Z^i(g^{-1})=-Z^i(g)$
	\item $\mathcal{L}_iZ^j(e)=\delta_i^j$
	\item $Z(hgh^{-1})\cdot X = Z(g)\cdot Ad_hX$.
\end{enumerate}
With these assumptions the above construction for the group Fourier transform applies.

\section{Dual representation of quantum mechanics on $\G$}
To describe a quantum system with $\G$ as its configuration space, we start as usual with the Hilbert space $L^2(\G,\dd g)$ of wave functions on $\G$. Employing the Dirac notation, a basis $\{|g\rangle\ |\ g \in \G\}$ is given by delta distributions $\langle h | g \rangle := \delta_g(h)$. We may understand $|g\rangle$ as eigenstates of the group element operator $\hat{g}$ in the abstract sense $\phi(\hat{g})|g\rangle \equiv \phi(g)|g\rangle$, where $\phi$ is a function on $\G$. In particular, we denote $\hat{Z}^i := Z^i(\hat{g})$. We may define operators $\hat{X}_i := -i\mathcal{L}_i$, which are conjugate to $\hat{Z}^i$ in the sense of $[\hat{X}_i,\hat{Z}^j]|e\rangle = -i\delta_i^j|e\rangle$, i.e., they satisfy the canonical commutation relations at the unit element (but receive non-linear corrections away from the unity).\footnote{We have set $\hbar=1$.} These commutators reproduce the canonical Poisson algebra of the cotangent bundle\footnote{The classical Poisson algebra follows unambiguously from the canonical symplectic structure of $\mathcal{T}^*\G$.} $\mathcal{T}^*\G \simeq \G \times \g^*$ of $\G$, given by the Poisson bracket
\be
	\{F,G\}_{PB} = (\prt_X^i F) (\mathcal{L}_i G) - (\prt_X^i G) (\mathcal{L}_i F) + c_{ij}^{\phantom{ij}k} X_k (\prt_X^i F) (\prt_X^j G)\ ,
\ee
in the appropriate way, namely, $\widehat{\{F,G\}}_{PB} = i[\hat{F},\hat{G}]$ for the canonical operators.

Now, we may consider another set of state vectors $\{|X\rangle\ |\ X \in \g^*\}$, defined via $\langle g | X \rangle := E_g(X)$. These states also form a basis of the Hilbert space in the sense of the $\star$-product, 
\be
	\langle Y | X \rangle \equiv \delta_\star(X-Y) \quad\mt{and}\quad \int_{\g^*} \frac{\dd X}{(2\pi)^d}\ |X\rangle\star\langle X| \equiv \hat{\1}\ ,
\ee
and therefore we obtain a dual representation of the quantum system in terms of $X \in \g^*$:
\be
	\langle X | \phi \rangle = \int_\G \dd g\ \langle X | g \rangle \langle g | \phi \rangle \ .
\ee
The canonical operators are given in this basis by
\be
	\hat{Z}^i|X\rangle = -i\prt_X^i|X\rangle \quad\mt{and}\quad \hat{X}_i|X\rangle = |X\rangle \star X_i \ .
\ee
Then, we also have $\phi(\hat{g})|X\rangle = \phi\big(g(-i\prt_X^i)\big)|X\rangle$, where $g(Z^i)$ is the inverse of the coordinates $(Z^i(g))$.\footnote{Here we must again note that, strictly speaking, $(Z^i(g))$ is not invertible for a set of measure zero in $\G$. However, apart from the abstract Dirac notation, we will only need to assume these relations to be defined under integration over $\G$, since this is how the dual representation is defined to begin with.}

\section{Phase space path integral}
Now, taking advantage of the dual representation for the Hilbert space introduced in the previous section, we may write down the first order phase space path integral for a quantum system with $\G$ as the configuration space. As usual, we have a Hamiltonian operator $\hat{H}$ generating the time-evolution of the system. Then, the propagator for the quantum system in $\G$-basis reads
\be
	\langle g',t' | g,t \rangle := \langle g'| e^{ -i(t'-t)\hat{H} } |g\rangle \ .
\ee
Introducing time-splitting $(t'-t) \equiv \epsilon N$ and inserting resolutions of identity $\hat{\1} \equiv \int_\G \dd g\ |g\rangle \langle g|$ and $\hat{\1} \equiv \int_{\g^*} \frac{\dd X}{(2\pi)^d}\ |X\rangle \star \langle X|$ for each timestep yields
\be
	\langle g',t' | g,t \rangle = \left[ \prod_{k=1}^{N-1} \int_\G \dd g_k \right] \left[ \prod_{k=0}^{N-1} \int_{\g^*} \frac{\dd X_k}{(2\pi)^d} \right] \left[ \prod_{k=0}^{N-1} \langle g_{k+1}|X_{k}\rangle \star \langle X_{k}| e^{ -i\epsilon\hat{H} } |g_{k}\rangle \right] \ .
\ee
To obtain the path integral, we take the limits $\epsilon \rightarrow 0$, $N \rightarrow \infty$, while $\epsilon N = (t'-t)$. For the path integral to satisfy the Schr\"odinger equation, only the linear order in $\epsilon$ must be taken into account \cite{DeWitt}, and we may approximate 
\be
	\langle X_{k}| e^{ -i\epsilon\hat{H} } |g_{k}\rangle \approx \big(1 - i\epsilon H_{\star}(g_k,X_k)\big) \star \langle X_{k}|g_{k}\rangle \approx e^{ -i\epsilon H_{\star}(g_k,X_k) } \star \langle X_{k}|g_{k}\rangle \ ,
\ee
where $\langle g|\hat{H}|X\rangle =: \langle g|X\rangle \star H_\star(g,X)$. Further, using (\ref{eq:starint}) and (\ref{eq:intcycl}), we arrive at the discrete timestep expression for the path integral
\bea
	\langle g',t' | g,t \rangle &=& \lim_{\substack{\epsilon \rightarrow 0\\N \rightarrow \infty}} \left[ \prod_{k=1}^{N-1} \int_\G \dd g_k \right] \left[ \prod_{k=0}^{N-1} \int_{\g^*} \frac{\dd X_k}{(2\pi)^d} \right] \nn
	&& \times \exp\left\{ i \sum_{k=0}^{N-1} \epsilon \left( \frac{1}{\epsilon}Z(g_{k}^{-1}g_{k+1}) \cdot X_{k} - H_q(g_k,X_k) \right) \right\} \ ,\label{eq:PIdisc}
\eea
where $H_q(g,X) := \omega^{-1}\!(-i\prt_X^i) H_\star(g,X)$ is \emph{the quantum corrected Hamiltonian} for the system. These quantum corrections to the classical Hamiltonian, which are due to the non-commutative structure of the phase space, are crucial for the propagator to satisfy the Schr\"odinger equation. At the continuum limit (\ref{eq:PIdisc}) can be formally expressed as a continuum path integral
\be\label{eq:PIcont}
	\langle g',t' | g,t \rangle =  \int_{\substack{g(t)=g\\g(t')=g'}} \mathcal{D}g\ \mathcal{D}X\ \exp\left\{ i \int_{t}^{t'}\!\!\! \dd t\ L_q(g,X) \right\} \ ,
\ee
where $L_q(g,X) := V \cdot X - H_q(g,X)$ is the quantum corrected Lagrangean with
\be
	V^i(t) := \lim_{\epsilon \rightarrow 0} Z^i(g^{-1}(t)g(t+\epsilon))/\epsilon \ .
\ee

For the special case of $\G = \mathbb{R}^d$ with the coordinates $Z^i(g)$ obtained via the inverse exponential map (\ref{eq:PIdisc}) and (\ref{eq:PIcont}) agree with the usual expressions for the first order path integral. In \cite{OR} the stationary phase approximation to (\ref{eq:PIcont}) was studied for $SO(3)$ with coordinates $Z^i(g)=-(i/2)\tr(g\sigma^i)$ and found to yield the correct classical equations of motion. Also, it was shown that for the case of a free particle on $SO(3)$ the quantum corrections agree with the well-established results \cite{DeWitt,MT,CD} obtained by more conventional methods.

\section{Conclusion \& Comments}
We have formulated a notion of group Fourier transform for a finite dimensional Lie group $\G$, and used it to write down the phase space path integral for a quantum system on $\G$. The main advantages of the approach to quantum systems on Lie groups presented here are:
\begin{enumerate}
	\item The non-commutative dual variables are analogous to the classical commutative ones, so their physical meaning is intuitive.
	\item There is no need to resort to the representation theory of Lie groups and special functions, which often are rather involved.
\end{enumerate}
The price to pay for these simplifications is, of course, the non-commutativity of the dual space, and depending on the system in case one may prefer one approach over the other. However, the clear physical interpretation of the non-commutative dual variables has already proven itself useful in the case of quantum gravity models in shedding light on their geometrical content \cite{BDOT,BO,BGO,BO2}. The generalization of the group Fourier transform to an arbitrary Lie group removes the restriction of these considerations to the case of products of $SO(3)$ or $SU(2)$. In particular, from the point of view of quantum gravity, the possibility to formulate the metric representation for 4d quantum gravity models using the transform for the Lorentz group is important. Considering the second point, also the study of the semi-classical limit of the quantum system is greatly facilitated by not having to deal with special functions \cite{OR}, and therefore we expect non-commutative methods to be useful also in studying the asymptotics of quantum gravity models.

\section*{Acknowledgements}
The author wishes to thank Daniele Oriti for his encouragement to write this paper, and Carlos Guedes for many useful discussions on the subject.


\begin{thebibliography}{99}
		\bibitem{FL} Freidel L., Livine E., {\it Class.~Quant.~Grav.} {\bf 23} 2021 (2006), arXiv:hep-th/0502106\\Freidel L., Livine E., {\it Phys.~Rev.~Lett.} {\bf 96} 221301 (2006), arXiv:hep-th/0512113
	
	\bibitem{FM} Freidel L., Majid S., {\it Class.~Quant.~Grav.} {\bf 25} 045006 (2008), arXiv:hep-th/0601004
		
	\bibitem{JMN} Joung E., Mourad J., Noui K., {\it J.~Math.~Phys.}~{\bf 50} 052503 (2009), arXiv:0806.4121
	
	\bibitem{DGL} Dupuis M., Girelli F., Livine E., arXiv:1107.5693
	
	\bibitem{S} Scroers B., arXiv:math/0006228
	
	\bibitem{MS} Majid S., Schroers B., {\it J.~Phys.}~A {\bf 42}  425402 (2009), arXiv:0806.2587
	
	\bibitem{BDOT} Baratin A., Dittrich B., Oriti D., Tambornino J., {\it Class.~Quant.~Grav.} {\bf 28} 175011 (2011), arXiv:1004.3450
	
	\bibitem{BO} Baratin A., Oriti D., {\it Phys.~Rev.~Lett.} {\bf 105} 221302 (2010), arXiv:1002.4723
	
	\bibitem{BGO} Baratin A., Girelli F., Oriti D., {\it Phys.~Rev.}~D {\bf 83} 104051 (2011), arXiv:1101.0590
	
	\bibitem{OR} Oriti D., Raasakka M., {\it Phys.~Rev.}~D {\bf 84} 025003 (2011), arXiv:1103.2098
	
	\bibitem{CD} Chaichian M., Demichev A., \emph{Path Integrals in Physics, Volume I}, Institute of Physics Publishing, London, 2001
	
	\bibitem{BO2} Baratin A., Oriti D., arXiv:1108.1178
	
	\bibitem{DeWitt} DeWitt B., {\it Rev.~Mod.~Phys.} {\bf 29} 377 (1957)
	
	\bibitem{MT} Marinov M.S., Terentyev M.V., {\it Fortsch.~Phys.} {\bf 27} 511 (1979)
	
\end{thebibliography}
\end{document}